\definecolor{nred}{rgb}{0.2,0.2,0.7}
\definecolor{ngreen}{rgb}{0.2,0.6,0.2}
\definecolor{nred}{rgb}{0.7,0.2,0.2}
\definecolor{nblack}{rgb}{0,0,0}
\newcommand{\ket}[1]{|#1\rangle}
\newcommand{\ketbra}[2]{\ket{#1}\!\bra{#2}}
\newcommand{\bra}[1]{\langle#1|}
\newcommand{\braket}[2]{\langle #1 | #2 \rangle}
\newcommand{\proj}[1]{\ket{#1}\!\bra{#1}}
\newcommand{\tr}{\text{tr}}
\renewcommand{\H}{\mathcal{H}}
\def\G{\mathrm{G}}
\newtheorem{theorem}{Theorem}
\newtheorem{lem}[theorem]{Lemma}
\def\g{\mathrm{guess}}
\def\N{\mathcal{N}}
\def\tr{\mbox{tr}}
\def\bea{\begin{eqnarray}}
\def\eea{\end{eqnarray}}
\def\i{\mbox{id}}
\begin{document}

\title{ Detecting Noisy Channels by Channel Discrimination : \\Local versus Entangled Resources }

\date{\today} 

\author{ Joonwoo Bae$^{1}$ 
~and ~ Tanmay Singal$^{2}$ 
 \\ \\[0.1em]
{\it\small   $^{1}$ School of Electrical Engineering, Korea Advanced Institute of Science and Technology (KAIST),} \\
{\it\small 291 Daehak-ro, Yuseong-gu, Daejeon, 34141, Korea, and }  \\
{\it\small   $^{2}$Department of Applied Mathematics, Hanyang University (ERICA),} \\
{\it\small 55 Hanyangdaehak-ro, Ansan, Gyeonggi-do, 426-791, Korea.}  }

\maketitle

\begin{abstract}
Dynamics of many-qubit systems, that may correspond to computational processing with quantum systems, can be efficiently and generally approximated by a sequence of two- and single-qubit gates. In practical applications, however, a quantum gate prepared as a unitary transformation may appear as a noisy channel and consequently may inhibit quantum advantages. In this work, we apply the scheme of channel discrimination to detect if a quantum gate that is actually realized is unitary or noisy. We show that a two-qubit unitary transformation and its noisy counterpart can be optimally discriminated by local resources, without the necessity of creating entanglement repeatedly. It is also shown that the scheme can be applied to estimation of the fraction of noise existing in quantum gates. 
\end{abstract}


\maketitle

\section{Introduction}
\label{introduction}

As quantum dynamics, that may lead to quantum advantages, can be realized by concatenation of quantum gates, it is of great importance to prepare quantum gates with a high precision for quantum information applications. Remarkably, arbitrary quantum dynamics can be constructed by composing single- and two-qubit gates only ~\cite{ref:kitaev} ~\cite{ref:deutsch} ~\cite{ref:seth}, which are then called elementary quantum gates ~\cite{ref:bennett}. It appears that realisation of single-qubit states is often feasible in experiment, while two-qubit gates and their concatenation still remain challenging. This can be equivalent to the difficulty of generating a large-size entangled state ~\cite{ref:briegel}. Once entangled states are obtained, they are a general resource for quantum information processing ~\cite{ref:ent}. Therefore, quantum interactions that lead to composition of quantum gates are also a resource for quantum information processing. 


\begin{figure}[t]
\begin{center}
\includegraphics[width=3.5in]{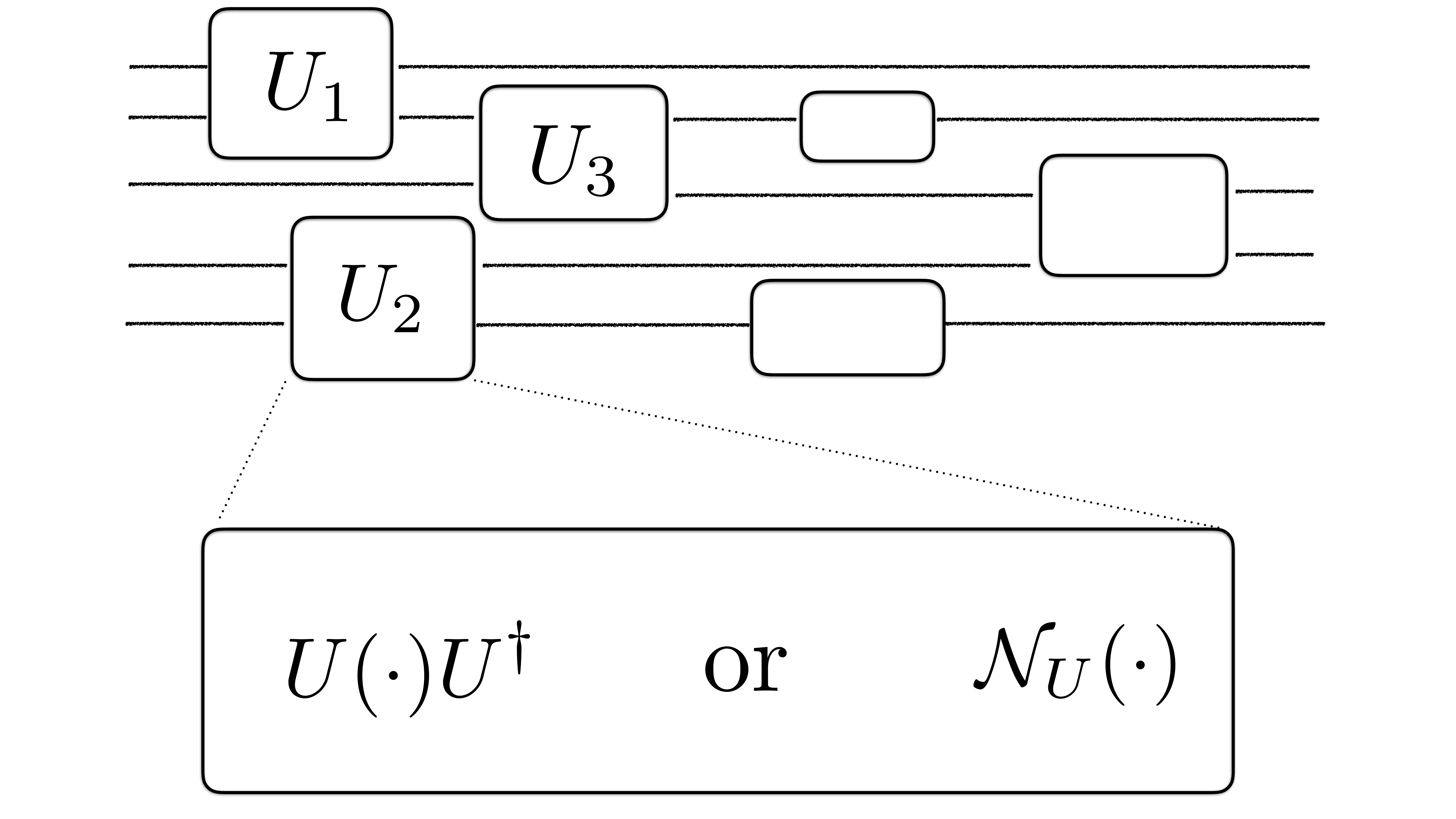}
\caption{ A unitary gate on quantum states can be decomposed into a sequence of single and two-qubit gates, $U \approx U_1 U_2 U_3 \cdots$, to arbitrary precision, where two-qubit gates introduce interactions of qubits and can generate entangled states. In a practical realization, it may happen that a gate suffers unwanted interaction with environment so that its noisy counterpart is actually implemented on qubits. For instance, the quantum gate $U_2$ that transforms a state $\rho$ as $U_2 (\rho) = U \rho U^{\dagger}$ may be actually realised as a noisy channel: $\N_{U} (\rho)$ with some probability. From a practical point of view, one would like to know which of $U (\cdot) U^{\dagger}$ and $\N_{U} (\cdot)$ has been actually realized. }\label{fig:fic}
\end{center}
\end{figure}

Let us now consider a practical scenario of realizing a two-qubit gate, denoted by $U$. There often exist unwanted interactions with environment, due to which one ends up with a noisy channel, $\N_U$, that corresponds to a completely positive and trace-preserving map over quantum states. In other words, once a two-qubit gate is implemented, it may appear as a noisy channel with some probability, see Fig. (\ref{fig:fic}). If a two-qubit gate containing some noise is incorporated to a quantum circuit, the noise may propagate over the circuit since the gate introduces interactions among quantum systems. One can devise a quantum circuit where a local noise does not contaminate other systems. If this demands more resources, such as is the case when designing fault-tolerant circuits for quantum error correcting codes, which requires even more quantum interactions and entanglement, from a practical point of view one may prefer to detect a noisy quantum gate immediately, so that the gate may be chosen individually to fix or improve. This defines the problem of channel identification, that aims to find which of $U$ and $\N_U$ is actually realized as the operation in a quantum gate. One may also have {\it a priori} probabilities about which one is to appear. To tackle the problem in a brute force way, one could apply quantum process tomography that identifies the  quantum channel that has been actually performed ~\cite{ref:qpt1} ~\cite{ref:qpt2}. This, however, has a higher cost due to a large number of quantum measurements together with classical post-processing. 

We here consider the strategy of optimal quantum channel discrimination to find if a two-qubit gate realized in a circuit is unitary or noisy, i.e., channel discrimination between a two-qubit gate $U$ and its noisy operation $\N_U$. This leads to a cost effective method fitted in a scenario of one-shot channel identification. In particular, we address the question of how useful local operations and classical communication (LOCC) is in the scenario of channel discrimination. This is because measurement in an entangled basis repeatedly requires an implementation of quantum interactions, and such interactions may again be tainted with the same noisy imperfections as happens in $\N_U$. In fact, we show that optimal channel discrimination can be achieved by local resources only, i.e., LOCC, and also provide an LOCC protocol for optimal discrimination. Moreover, we also show that the protocol can be applied for finding how noisy $\N_U$ is, that is, estimation of the noise fraction. Examples are also presented. 

The paper is structured as follows. In the next Sec. \ref{summary} we summarize the results. In sec. \ref{section:problem}, we introduce the problem in more detail, after which in Sec. \ref{section:optimaldiscrimination} we prove that a two-qubit gate and its noisy counterpart can be optimally discriminated using only LOCC. In Sec. \ref{sec:ex} relevant examples are shown, the controlled-NOT gate and the SWAP gate, in which the LOCC discrimination is described in detail. In Sec. \ref{sec:discussion}, the usefulness of the protocol for LOCC discrimination is discussed, firstly in the estimation of the noise fraction and secondly in various types of noisy channels. In Sec. \ref{sec:conclusion}, we conclude the result. 

\section{Summary of Results}
\label{summary}

Although quantum channels and quantum states are closely related to each other, e.g. ~\cite{ref:c1, ref:c2, ref:c3}, it has been recognized that they are not equivalent in tasks of distinguishability. First of all, ancillary systems can improve discrimination of quantum channels, which is not the case with quantum states. Namely, for any pair of channels $\N_{i}: S(\H_{\mathrm{in}})\rightarrow S(\H_{\mathrm{out}} )$ for $i=1,2$ where $S(\H)$ denotes the set of quantum states on a Hilbert space $\H$, there exists a quantum state $\rho\in S(\H_1 \otimes \H_A)$ that improves distinguishability of the quantum channels ~\cite{ref:piani}. This is in fact equivalent to the condition that entanglement is contained in a quantum state $\rho\in S(\H_1 \otimes \H_A)$ ~\cite{ref:piani} ~\cite{ref:baedarek}. 

For unitary transformations, it has been shown that, for two unitaries $U$ or $V$ if they can be repeatedly applied to a fixed quantum state without ancillary systems attached, there exists a finite number of repetitions $N_G$ after which the two unitary transformations $U^{N_G}$ and $V^{N_G}$ can be perfectly discriminated ~\cite{ref:acin} ~\cite{ref:renes}. The results can be extended to unitary transformations for multipartite systems, even without requirement of entangled states.  Namely, perfect discrimination of multipartite unitary transformations can be done within a finite number $N_{LOCC}$ of repetitions in an LOCC scenario ~\cite{ref:ch1} ~\cite{ref:ch2}. It is, nonetheless, noteworthy that more repetitions may be necessary in an LOCC scenario, that is, we have $N_{LOCC}\geq N_{G}$ in general ~\cite{ref:ch1} ~\cite{ref:ch2} ~\cite{ref:bae}. 

The main contribution of the present article is to investigate optimal discrimination between a two-qubit gate $U$ and its noisy counterpart $\N_U$, and to find its applications to the problem of deciding if a two-qubit gate in a quantum circuit is unitary or noisy. In particular, it is shown that optimal channel discrimination can be achieved using local resources only: a protocol for optimal channel discrimination is provided, which nowhere applies entangled resources but works only with LOCC, preparation of a separable state and local measurements with classical communication.

The channel discrimination with local resources works as follows: to establish which of $U$ or $\N_U$ is realized in the circuit, we choose our input state $\rho$ to maximize the distinguishability between $U \rho U^\dag$ and $\N_U [\rho]$. This maximization is done over all input states $\rho$. This optimization can be achieved with a pure state, denoted by $|\psi\rangle$, such that $U \ketbra{\psi}{\psi} U^\dag$ and $\N_U [ \ketbra{\psi}{\psi}]$ are rank one and rank-three states. We show that one can prepare a product input state $\ket{\psi}$ such that the output states $U \ket{\psi}$ and $\N_U[\ket{\psi}]$ can be optimally discriminated using LOCC. For this, the results are shown in the following order. In Theorem \ref{thm:1}, it is shown that optimal discrimination of two bipartite quantum states with global operations can also be achieved by LOCC if and only if two measurement operators, that is, positive-operator-valued-measures (POVMs), for the optimal discrimination can be discriminated perfectly by LOCC. This allows one to consider the equivalent formulation of LOCC distinguishability. We then show that the optimal measurement that discriminates between a two-qubit unitary transformation and its noisy counterpart contains rank-one and rank-three POVMs. Lemma \ref{thm:theory:1} shows that these POVMs can be perfectly discriminated by LOCC if and only if one of them is separable. In Theorem \ref{thm:theory:2}, it is shown that for a two-qubit unitary transformation and its noisy counterpart, one can always find a product state such that the rank-one POVMs in the optimal measurement for distinguishing the resulting states is separable. Therefore, a two-qubit unitary transformation and its noisy counterpart can be perfectly discriminated by local resources only. In addition, the LOCC protocol can also be exploited to estimate the fraction of noise existing in a noisy channel. We present examples to demonstrate the LOCC protocol for channel identification and noise estimation.

\section{Identification of quantum gates}
\label{section:problem}


For a two-qubit gate $U$, its noisy counterpart $\N_U$ corresponds to a general quantum channel, i.e. completely positive and trace-preserving map over quantum states. Let us consider its noisy counterpart with random noise in the following form,
\bea
\label{nc}
\N_{U}^{p} [ \cdot] = (1-p ) U (\cdot) U^{\dagger}  + p D (\cdot) \label{eq:noisychannel}
\eea
where $p\in[0,1]$ is the fraction of noise and $D$ denote the complete depolarization, $D(\rho) = \mathrm{I}/4$ for all two-qubit states $\rho$. In order to find if implementation is given by gate $U$, or not, we consider the discrimination between gate $U$ and its noisy channel in a standard form $\N_{U}^p$.

\begin{figure}[t]
\begin{center}
\includegraphics[width=3.5in]{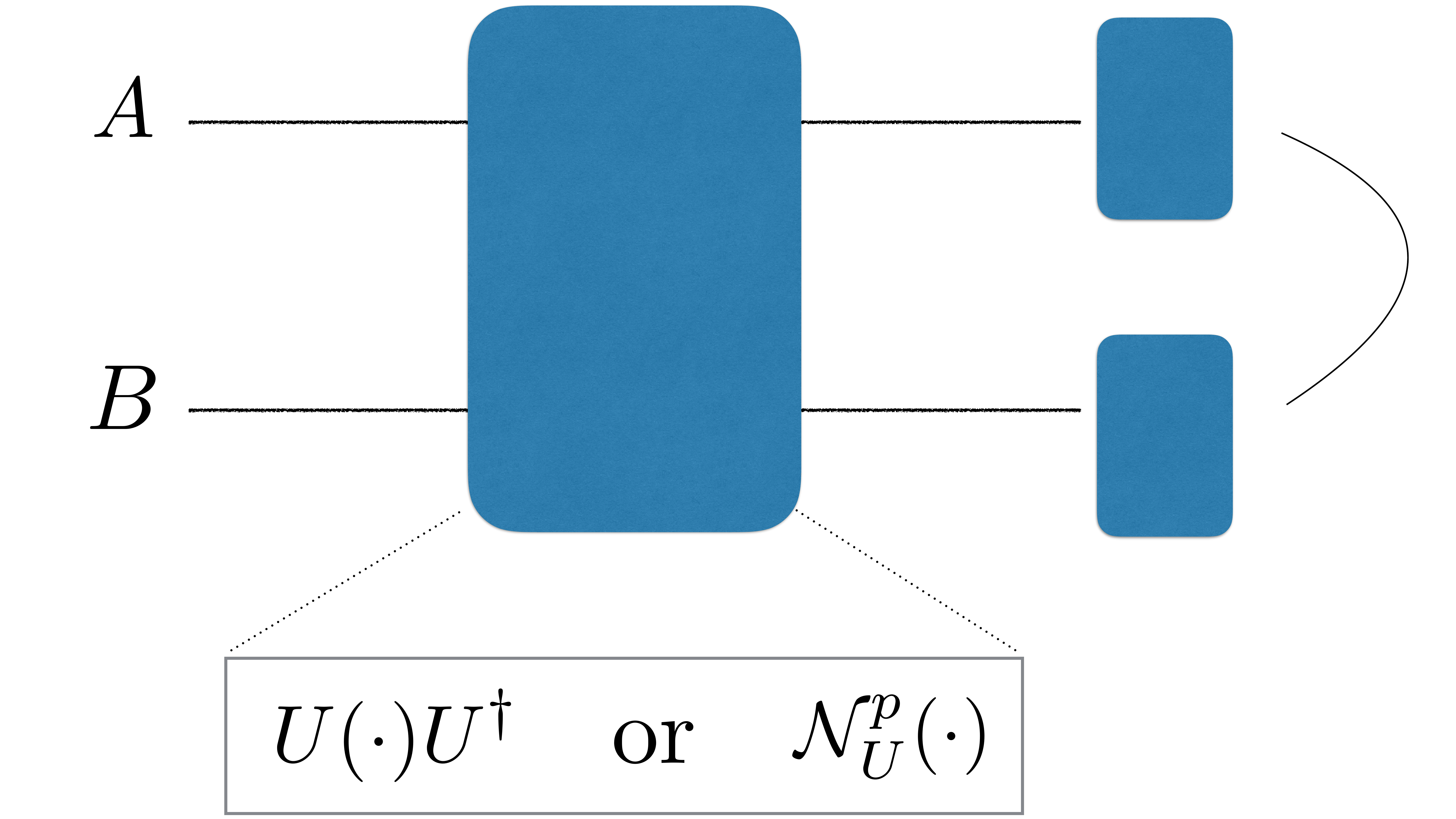}
\caption{ Once implementation of a two-buit gate $U$ is attempted, noise may occur with some probability, i.e., the ideal one $U$ or its noisy counterpart $\N_{U}^p$ in Eq. (\ref{eq:noisychannel}) appears with some probability. This can be identified as the problem of discriminating between two quantum operations, one unitary transformation and the other a general quantum channel. In an LOCC protocol, only local resources are exploited, such as the preparation of separable states $\rho_{AB} = \rho_A \otimes \rho_B$ in the beginning and local measurements wired by classical communication. A global strategy would incorporate preparation of an entangled state and collective measurement.  }\label{fig:fig1}
\end{center}
\end{figure}

To describe the channel identification, suppose that there is a box in which two operations a two-qubit gate $U$ and its noisy operation $\N_{U}^p$ are applied with  {\it a priori} probabilities $1-q$ and $ q$, respectively. The goal is to find if the box applies the unitary transformation or the noisy one in the one-shot scenario. One can prepare an input state and get measurement outcomes, from which the conclusion is then made. This is a typical scenario one has in a laboratory: preparing experimental apparatus to perform an operation on quantum states, one wants to know if the experimental setup corresponds to a desired operation or a noisy counterpart. 

Let $X$ denote the operation prepared in the box, $X=U,~\N_{U}^p$. We introduce an oracle function $f$ which knows the preparation of an operation in the box with certainty: it produces outcomes $0$ for $X=U$ and $1$ for $X=\N_{U}^p$. Then, we denote by $T$ an one-shot strategy to determine the value of the unknown variable $X$. Further restrictions can be made on the strategy $T$ such as LOCC when state preparation and measurement are performed locally together with classical communication. Channel identification can be described as the problem of finding a strategy $T$ which maximizes the probability that $T(X) = f(X)$:
\bea
p_{\g}   & = & \max_{T} \sum_{X =U, \N_{U}^p } \mathrm{Pr} [ T(X) = f(X)  ] \nonumber \\
 & = & \max_{T} ~ (1-q)~ \mathrm{Pr} [   T(X)  =0 |  f(X) =0  ] + q ~ \mathrm{Pr} [   T(X)  =1|  f(X) =1  ].\label{eq:chid}
\eea 
Recall that $1-q$ and $q$ are {\it a priori} probabilities: $X$ is given as a unitary transformation $U$ with probability $1-q$, and as its noisy counterpart $\N_{U}^p$ with probability $q$.

Suppose that the strategy applies global operations $\G$, together with $k$-dimensional ancillary system. For an input state $\rho$ to the box, there are two possibilities, $(\i_k \otimes U) \rho (\i_k \otimes U)^{\dagger}$ and $\i_k \otimes \N_{U}^p [\rho]$, which need to be discriminated by measurement. Note that ancillary systems $k>1$ are useful when quantum channel discrimination can be improved by exploiting entanglement between system and ancillas ~\cite{ref:mauro, ref:piani, ref:baedarek}. In this case, an optimal input state is entangled and the optimal measurement may also be a collective measurement on the global system, i.e., the system and ancillas.


Our goal here is to identify implementation of a two-qubit gate, which can generate entangled states. If entanglement between system and ancillas is to be exploited for the purpose, it means that another implementation of a two-qubit gate to generate entangled resources is required. Since we want to avoid the cost associated with any non-local resources, we restrict the consideration to cases where ancillas are not exploited and consequently entangled resources are not applied. When entangled resources in systems are not exploited, one can easily find that ancillary systems are of no use in this case to improve quantum channel discrimination. 

Without ancillary systems, discrimination of channels works by preparing an optimal state in such a way that the resulting states, after the channel use, are the most distinguishable. In this case, for global strategies $T=\G$, we have
\bea
p_{\g}^{(\G)}  & = & \max_{\rho} \max_{\Pi_0, \Pi_1}~ (1-q)\tr[ U \rho U^{\dagger} \Pi_0]   +  q \tr[ \N_{U}^{p} (\rho) \Pi_1]  \nonumber \\ 
& = &  \frac{1}{2} + \frac{1}{2} \max_{\rho}  \|  (1-q) U \rho U^{\dagger} - q \N_{U}^p (\rho) \|_1 \label{eq:pg1}
\eea
where $\Pi_0$ and $\Pi_1$ are POVMs such that $\Pi_0+ \Pi_1 =\mathrm{I}$, and $\| \cdot \|_1$ means the $L_1$-norm. This is a well-known result in minimum-error state discrimination between two quantum states ~\cite{ref:helstrom}, see also related reviews ~\cite{ref:r1} ~\cite{ref:r2} ~\cite{ref:r3} ~\cite{ref:r4} ~\cite{ref:r5}. Note that, in the above, the optimization is performed over all two-qubit states and then is achieved by a pure state. It is straightforward to compute the guessing probability, as follows,
\bea
p_{\g}^{(\G)} & = & \frac{1}{2} \left(1 +  \frac{3}{4}pq + \left| 1-2q + \frac{3}{4}pq \right| \right) \label{eq:pg2}
\eea
When $1-2q + \frac{3}{4}pq <0$, we have $p_{\g}^{(\G)} =q$ where the measurement contains only a single POVM, the identity. This corresponds to the case where no measurement is actually applied. The optimal strategy is to guess the noisy channel $\N_{U}^p$ all the time without measurement. When $1-2q + \frac{3}{4}pq \geq 0$, both POVM elements are non-zero, and the guessing probability is given by $p_{\g}^{(\G)} = 1-q +3pq/4$. The guessing probability is obtained with an optimal input state $|\psi\rangle$ that may be entangled and also by applying collective measurement, i.e. measurement in entangled basis. In what follows, we show that local resources can indeed attain optimal discrimination shown in Eq. (\ref{eq:pg2}) and also present the LOCC protocol.

\section{Optimal discrimination with local resources}
\label{section:optimaldiscrimination}

We here consider the strategy $T =\mathrm{LOCC}$ for optimal discrimination between the ideal one $U$ and its noisy counterpart $\N_{U}^{p}$ appearing with probability $1-q$ and $q$, respectively.  The so-called LOCC norm has been introduced in Ref. ~\cite{ref:winter} in an operational way,
\bea
\| X \|_{\mathrm{LOCC}} = \sup_{\mathcal{M}\in \mathrm{ LOCC}} \| \mathcal{M} (X) \|_1 \nonumber
\eea
where $\mathcal{M}$ denotes a set of POVMs or quantum instruments associated to LOCC. In terms of the LOCC norm, the guessing probability in Eq. (\ref{eq:chid}) can be found as follows,
\bea
p_{\g}^{(\mathrm{LOCC})}  & = & \max_{T =\mathrm{LOCC}} ~ (1-q)~ \mathrm{Pr} [  f (T(X)) =0 |  f(X) =0  ] + q ~ \mathrm{Pr} [  f (T(X)) =1|  f(X) =1  ]. \nonumber\\
& = & \frac{1}{2} + \frac{1}{2} \max_{\rho\in \mathrm{SEP}}  \|  (1-q) ~U \rho U^{\dagger} - q ~\N_{U}^p (\rho) \|_{\mathrm{LOCC}}.  \label{eq:pglocc}
\eea

In the above, the LOCC norm $\| \cdot \|_{\mathrm{LOCC}}$ can be achieved by LOCC protocol to discriminate between two resulting states $U \rho U^{\dagger}$ and $\N_{U}^p (\rho)$ appearing with probabilities $1-q$ and $q$ respectively. This corresponds to local measurements wired by classical communication, see Fig. (\ref{fig:fig1}). Then, the LOCC norm is to be maximized over all separable states, since only a separable state can be prepared in an LOCC protocol. Since pure states are extremal, it suffices to consider pure states in the optimization, i.e., $\rho = | \psi\rangle \langle \psi |  \otimes |\phi \rangle \langle \phi| $ for some states $|\psi\rangle\in \H_A$ and $|\phi\rangle \in \H_B$. In the following, we show that $p_{\g}^{(G)} =p_{\g}^{(\mathrm{LOCC})}$, i.e., in Eq. (\ref{eq:pglocc}) we have $\| \cdot \|_{\mathrm{LOCC}} = \|  \cdot \|_1$ and the maximization can be achieved by a product state.

\subsection{LOCC discrimination} 



To characterize quantum state discrimination with an LOCC protocol, we present the following theorem that leads to simplification of the analysis.

\begin{theorem}
\label{thm:1}
For bipartite quantum states $\{q_i ,\rho_i \}_{i=1}^2$, where $\rho_i\in S(\H_A\otimes \H_B)$ and $S$ denotes the set of quantum states on a Hilbert space, let $\{ \Pi_i\}_{i=1}^2$ denote POVMs for optimal discrimination. We also denote normalized POVMs by $\widetilde{\Pi_i} = \Pi_i / \tr[\Pi_i]$ for $i=1,2$, which thus correspond to quantum states. It holds that $\|q_1 \rho_1 - q_2 \rho_2 \|_{\mathrm{LOCC}} = \|q_1 \rho_1 - q_2 \rho_2 \|_1$ if and only if two normalized POVMs $\{\widetilde{\Pi_i} \}_{i=1}^2$ can be perfectly discriminated by an LOCC protocol. 
\end{theorem}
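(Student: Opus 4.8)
The plan is to exploit the Helstrom structure of optimal two-state discrimination. For two states $\{q_i,\rho_i\}_{i=1}^2$, write $\Delta = q_1\rho_1 - q_2\rho_2$ and let $\Delta = \Delta_+ - \Delta_-$ be its Jordan decomposition into positive and negative parts with orthogonal supports $P_+$ and $P_-$. The optimal global discrimination is achieved by $\Pi_1 = P_+$ (plus possibly part of the kernel of $\Delta$) and $\Pi_2 = P_-$ (plus the rest of the kernel), and the optimal success probability relates to $\|\Delta\|_1 = \tr\Delta_+ + \tr\Delta_-$ via $p_{\text{succ}} = \tfrac12(1 + \|\Delta\|_1)$. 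Since we can always shrink the POVMs to be supported exactly on the ranges of $\Delta_+$ and $\Delta_-$ (the kernel contributes nothing to $\tr[\Delta\Pi_1]$), I would first argue that without loss of generality the optimal $\{\Pi_i\}$ have orthogonal supports, so the normalized $\widetilde\Pi_i$ are states with orthogonal supports — in particular they are perfectly distinguishable by a \emph{global} measurement. The content of the theorem is that this perfect distinguishability upgrades to LOCC distinguishability exactly when $\|\Delta\|_{\mathrm{LOCC}} = \|\Delta\|_1$.

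For the forward direction ($\|\Delta\|_{\mathrm{LOCC}} = \|\Delta\|_1 \Rightarrow \widetilde\Pi_i$ LOCC-distinguishable): by definition of the LOCC norm there is a sequence of LOCC measurements $\mathcal{M}$ with $\|\mathcal{M}(\Delta)\|_1$ arbitrarily close to $\|\Delta\|_1$; I would show that any LOCC measurement attaining (in the limit) the global Helstrom bound must, after coarse-graining its outcomes into a guess for "1" versus "2", give measurement operators $\Lambda_1,\Lambda_2$ that reproduce $\tr[\Delta\Lambda_1] - \tr[\Delta\Lambda_2] = \|\Delta\|_1$. Optimality forces $\Lambda_1$ to be supported on $\mathrm{ran}(\Delta_+)$ and $\Lambda_2$ on $\mathrm{ran}(\Delta_-)$ (any overlap with the wrong support or with $\ker\Delta$ strictly lowers the value unless it is zero there). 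Hence $\Lambda_1 \le P_+ = $ (the support projector of $\widetilde\Pi_1$) and $\Lambda_2 \le P_-$; feeding a state drawn from $\widetilde\Pi_1$ or $\widetilde\Pi_2$ into this same LOCC measurement and outputting the coarse-grained guess therefore never errs, so the $\widetilde\Pi_i$ are perfectly LOCC-distinguishable. (A small technical point is passing from the supremum in the LOCC norm to an actual optimal measurement; I would handle this by a compactness/limiting argument or by noting the relevant set of two-outcome LOCC POVMs relevant here is effectively finite-dimensional and closed, or simply argue with $\varepsilon$'s throughout.)

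For the reverse direction ($\widetilde\Pi_i$ LOCC-distinguishable $\Rightarrow$ $\|\Delta\|_{\mathrm{LOCC}} = \|\Delta\|_1$): suppose an LOCC protocol perfectly distinguishes the states $\widetilde\Pi_1$ and $\widetilde\Pi_2$; let $\{\Lambda_1,\Lambda_2\}$ (coarse-grained to two outcomes) be its POVM, so $\tr[\widetilde\Pi_1\Lambda_2] = \tr[\widetilde\Pi_2\Lambda_1] = 0$, i.e. $\Lambda_2$ vanishes on $\mathrm{ran}(\Pi_1) = \mathrm{ran}(\Delta_+)$ and $\Lambda_1$ vanishes on $\mathrm{ran}(\Delta_-)$. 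Now evaluate $\|\mathcal{M}(\Delta)\|_1 \ge |\tr[\Delta\Lambda_1] - \tr[\Delta\Lambda_2]|$. Since $\Delta_-$ is annihilated by $\Lambda_1$ and $\Delta_+$ by $\Lambda_2$, and $\Lambda_1+\Lambda_2 = \mathrm{I}$, on $\mathrm{ran}(\Delta_+)$ we get $\Lambda_1 = \mathrm{I}$ there, giving $\tr[\Delta_+\Lambda_1] = \tr\Delta_+$ and $\tr[\Delta_+\Lambda_2]=0$; symmetrically $\tr[\Delta_-\Lambda_2] = \tr\Delta_-$, $\tr[\Delta_-\Lambda_1]=0$. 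Hence $\tr[\Delta\Lambda_1] - \tr[\Delta\Lambda_2] = \tr\Delta_+ + \tr\Delta_- = \|\Delta\|_1$, so this LOCC measurement achieves the trace norm and $\|\Delta\|_{\mathrm{LOCC}} \ge \|\Delta\|_1$; the reverse inequality $\|\Delta\|_{\mathrm{LOCC}} \le \|\Delta\|_1$ is immediate since $\|\mathcal{M}(\Delta)\|_1 \le \|\Delta\|_1$ for any channel $\mathcal{M}$. This closes the equivalence.

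The main obstacle I expect is the reduction establishing that an \emph{optimal} (global or LOCC) two-outcome measurement must have its operators supported precisely on $\mathrm{ran}(\Delta_+)$ and $\mathrm{ran}(\Delta_-)$ — in particular handling the kernel of $\Delta$ and the non-uniqueness of optimal POVMs carefully, and making sure the normalization $\widetilde\Pi_i = \Pi_i/\tr[\Pi_i]$ is well defined (i.e. $\tr[\Pi_i] \neq 0$, which can fail in the degenerate case $\Delta_+ = 0$ or $\Delta_- = 0$ where no measurement is applied; this degenerate case should be excluded or treated as trivially true since then there is effectively one POVM and nothing to distinguish). The rest is a direct manipulation of traces against the Jordan decomposition of $\Delta$ together with the elementary fact that LOCC maps cannot increase the trace norm.
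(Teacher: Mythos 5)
Your proposal is correct and follows essentially the same route as the paper: both directions reduce to identifying the coarse-grained two-outcome POVM implemented by the LOCC protocol with the Helstrom projectors onto the supports of $\Delta_{\pm}$, which is exactly what the paper does by invoking uniqueness of the optimal two-state measurement together with $\Pi_1\Pi_2=0$, while you unpack that uniqueness explicitly via the Jordan decomposition. The kernel/non-uniqueness caveat you flag (where your intermediate claim $\Lambda_1\le P_+$ would need $P_+ + P_- = \mathrm{I}$) is the same issue the paper hides inside its uniqueness claim, and it is vacuous in the intended application, where $\Delta_+$ is rank one, $\Delta_-$ is rank three, and $\ker\Delta$ is trivial.
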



In the problem of discrimination between $U$ and $\N_{U}^{p}$, the resulting states $U\rho U^{\dagger}$ and $\N_{U}^{p}[\rho]$ for an input state $\rho$ may be entangled. Theorem 1 shows that from discrimination between two normalized optimal POVMs, one can find if optimal discrimination between the states in Eq. (\ref{eq:pg2}) can be achieved by an LOCC protocol. Therefore, we are here concerned with perfect discrimination by LOCC between two normalized POVMs $\widetilde{\Pi_1}$ and $\widetilde{\Pi_2}$, which correspond to the optimal discrimination between $U\rho U^{\dagger}$ and $\N_{U}^{p}[\rho]$ for an input state $\rho$.


Before proceeding to the proof, we describe the feature of a general LOCC protocol on a shared state $\rho_{AB}$. Without loss of generality, we assume that Alice first begins a protocol, in which $\{K_{j}^{A} \}$ denote her Kraus operators, i.e. it holds that $\sum_{j} {K_{j}^{A}}^{ \dagger} K_{j}^{A} = \mathrm{I}_A$. Alice's local operation on a shared state is described by $\{ K_{j}^{A} \otimes \i^{B} \}$. Bob acknowledges Alice's measurement outcome, denoted by $k_1$, according to which he devises local operations described by Kraus operators $\{ L_{j | k_1 }^{B} \}$ such that $\sum_j {L_{j | k_1 }^{B}}^{\dagger} L_{j | k_1 }^{B} = \mathrm{I}_B$. Let $l_1$ be Bob's outcome in the first round, after which the resulting state is given by, up to normalization,
\bea
\rho_{AB} ~\mapsto ~ (\mathrm{I}\otimes L_{l_{1} | k_1}^{B} ) (K_{k_1}^{A} \otimes \mathrm{I}) ~\rho_{AB} ~(K_{k_1}^{A} \otimes \mathrm{I})^{\dagger} ( \mathrm{I} \otimes L_{l_{1} | k_1}^{B})^{\dagger}. \nonumber
\eea
Note that this happens with the following probability
\bea
p_1 = \tr[ (\mathrm{I} \otimes  {{L_{l_{1} | k_1}^{B}}^\dagger} L_{l_{1} | k_1}^{B}  ) ( {{K_{k_1}^{A}}^\dagger} K_{k_1}^{A} \otimes \mathrm{I} ) \rho_{AB}]. \nonumber
\eea
According to the outcomes $(k_1, l_1)$, Alice decides local operations to apply, denoted by $\{ K_{j | (k_1,l_1)}^{A} \}$, and obtains an outcome denoted by $k_2$, corresponding to which Bob performs local operations $\{ L_{j |k_2 (k_1,l_1)}^{B} \}$. Let $(k_2,l_2)$ denote the measurement outcome in the second round. After the $n$-th round, we write the outcomes as 
\bea
(\vec{k}_n ,\vec{l}_n) := (k_n,l_n)(k_{n-1}, l_{n-1})\cdots (k_1, l_1). \nonumber
\eea
One can assume that, without loss of generality, an LOCC protocol terminates on the Bob's side with finite $n$. 

Then, the $n$-th Kraus operators of Alice and Bob can be generally written as
\bea
K_{(\vec{k}_{n} ,\vec{l}_{n} ) }^{A} & = &   K_{k_n | (\vec{k}_{n-1} ,\vec{l}_{n-1}) }^{A} K_{k_{n-1} | (\vec{k}_{n-2} ,\vec{l}_{n-2}) }^{A}\cdots K_{k_1}^{A}    \nonumber \\
L_{(\vec{k}_{n} ,\vec{l}_{n} ) }^{B} & = & L^{B}_{l_{n} | k_{n} (\vec{k}_{n-1} ,\vec{l}_{n-1}) } L^{B}_{l_{n-1} | k_{n-1} (\vec{k}_{n-2} ,\vec{l}_{n-2}) }\cdots L_{l_1|k_1}^{B}  \nonumber 
\eea
In this way, the resulting Kraus operators of Alice and Bob $\{ E_{( \vec{k}_n ,\vec{l}_n  )}^{\mathrm{AB_{LOCC}}} \}_{(\vec{k}_n ,\vec{l}_n )}  $ of the $n$ rounds for measurement outcomes $(\vec{k}_n ,\vec{l}_n )$ are described by
\bea
E_{ ( \vec{k}_n ,\vec{l}_n  )}^{\mathrm{AB_{LOCC}}}  =   K_{ (\vec{k}_{n } ,\vec{l}_{n } ) }^{A} \otimes L^{B}_{  (\vec{k}_{n } ,\vec{l}_{n } ) } \label{eq:KLOCC}
\eea
such that $\sum_{ ( \vec{k}_n ,\vec{l}_n  ) }   {E_{ ( \vec{k}_n ,\vec{l}_n  )}^{\mathrm{AB_{LOCC}}} }^{\dagger} E_{ ( \vec{k}_n ,\vec{l}_n  )}^{\mathrm{AB_{LOCC}}} =\mathrm{I}_{AB}$. With this description of LOCC, the proof of the aforementioned theorem is presented below.

\begin{proof}
\begin{subequations}
\label{eq:thm1}

($\Leftarrow$) Suppose the two states $\widetilde{\Pi_1}$ and $\widetilde{\Pi_2}$ can be perfectly discriminated by some LOCC protocol. This means that for all sequences of outcomes of the LOCC protocol, $(\vec{k}_n,\vec{l}_n)$, one can conclusively rule out one of the two states being present. This implies that all sequences $\{(k_n,l_n)\}$ can be partitioned into two classes: $\{(\vec{s}_n,\vec{t}_n) \}$ and $\{(\vec{v}_n,\vec{w}_n)\}$  such that POVM elements corresponding to them $ {E_{ ( \vec{s}_n ,\vec{t}_n  )}^{\mathrm{AB_{LOCC}}} }^{\dagger} E_{ ( \vec{s}_n ,\vec{t}_n  )}^{\mathrm{AB_{LOCC}}} $ and $ {E_{ ( \vec{v}_n ,\vec{w}_n  )}^{\mathrm{AB_{LOCC}}} }^{\dagger} E_{ ( \vec{v}_n ,\vec{w}_n  )}^{\mathrm{AB_{LOCC}}}$ satisfy the following
\bea
\tr[\widetilde{\Pi_2} ~ {E_{ ( \vec{s}_n ,\vec{t}_n  )}^{\mathrm{AB_{LOCC}}} }^{\dagger} E_{ ( \vec{s}_n ,\vec{t}_n  )}^{\mathrm{AB_{LOCC}}} ] =0~~\mathrm{and}~~\tr[\widetilde{\Pi_1}  ~{E_{ ( \vec{v}_n ,\vec{w}_n  )}^{\mathrm{AB_{LOCC}}} }^{\dagger} E_{ ( \vec{v}_n ,\vec{w}_n  )}^{\mathrm{AB_{LOCC}}}] = 0. \nonumber 
\eea
Since POVM elements of the LOCC protocol is complete, we have that
\bea
\sum_{(\vec{s}_n,\vec{t}_n)}{E_{ ( \vec{s}_n ,\vec{t}_n  )}^{\mathrm{AB_{LOCC}}} }^{\dagger} E_{ ( \vec{s}_n ,\vec{t}_n  )}^{\mathrm{AB_{LOCC}}} = \Pi_1 ~~\mathrm{and}~~ \sum_{(\vec{v}_n,\vec{w}_n)}{E_{ ( \vec{v}_n ,\vec{w}_n  )}^{\mathrm{AB_{LOCC}}} }^{\dagger} {E_{ ( \vec{v}_n ,\vec{w}_n  )}^{\mathrm{AB_{LOCC}}}} =  \Pi_2  \nonumber
\eea
This shows that the LOCC protocol implements the corresponding POVM: $\{ \Pi_1, \Pi_2\}$, and hence, the LOCC protocol is optimal to discriminate between states $\{q_i,\rho_i \}_{i=1}^{2}$. 

$(\Rightarrow)$ Conversely, we assume that Alice and Bob can implement the optimal discrimination for states $\{q_i, \rho_i \}_{i=1}^2$ by an LOCC protocol. This implies that all $(\vec{k}_n,\vec{l}_n)$ can be partitioned into two classes $\{(\vec{s}_n,\vec{t}_n) \}$ and $\{(\vec{v}_n,\vec{w}_n)\}$ such that the given states are optimally discriminated by the POVM elements in the following
\bea
\Pi'_1 \equiv \sum_{(\vec{s}_n,\vec{t}_n)}{E_{ ( \vec{s}_n ,\vec{t}_n  )}^{\mathrm{AB_{LOCC}}} }^{\dagger} E_{ ( \vec{s}_n ,\vec{t}_n  )}^{\mathrm{AB_{LOCC}}} ~~\mathrm{and}~~ \Pi'_2 \equiv \sum_{(\vec{v}_n,\vec{w}_n)}{E_{ ( \vec{v}_n ,\vec{w}_n  )}^{\mathrm{AB_{LOCC}}} }^{\dagger} E_{ ( \vec{v}_n ,\vec{w}_n  )}^{\mathrm{AB_{LOCC}}}. \nonumber
\eea
In Ref. ~\cite{ref:helstrom} it is shown that for two-state discrimination, POVM elements are unique, by which we have that $\Pi'_1= \Pi_1$ and $\Pi'_2 = \Pi_2$. Now note that the POVM elements $\Pi_1$ and $\Pi_2$, for two state discrimination, are projectors, hence $\Pi_1 \Pi_2 = 0$. This immediately implies that the states $\widetilde{\Pi}_1$ and $\widetilde{\Pi}_2$ can be perfectly discriminated by the same LOCC protocol.
\end{subequations}
\end{proof}

\subsection{LOCC preparation}

To exclude all possible non-local resources, we require not only measurement to be restricted to LOCC but also that our input state $\rho_{AB}$ be a separable state, see Fig. (\ref{fig:fig1}). Since pure states are extremal, it suffices to consider a product state, denoted by $\ket{\psi}$. That is, we aim to find a product state $\ket{\psi}$ such that normalized POVMs $\widetilde{\Pi}_1$ and $\widetilde{\Pi}_2$, given as optimal measurement for discrimination between the resulting states $U |\psi\rangle \langle \psi | U^{\dagger}$ and $\N_{U}^{p}[ |\psi\rangle\langle \psi | ]$, can be perfectly discriminated by LOCC. In fact, using the result of two-state discrimination \cite{ref:helstrom}, optimal POVMs can be found as follows,
\bea
\widetilde{\Pi}_1 = U |\psi\rangle \langle \psi | U^{\dagger}~~\mathrm{and}~~ \widetilde{\Pi}_2 = \frac{1}{3} (\mathrm{I} - U |\psi\rangle \langle \psi | U^{\dagger}) \label{eq:twopi}
\eea
In the following, we show the necessary and sufficient condition that $\widetilde{\Pi}_1$ and $\widetilde{\Pi}_2$ can be perfectly discriminated by LOCC. 


\begin{lem}[~\cite{ref:chit}]
\label{thm:theory:1}
For a two-qubit gate $U$, one can perfectly discriminate between the states $\widetilde{\Pi}_1$ and $\widetilde{\Pi}_2$ in Eq. (\ref{eq:twopi}) by LOCC if and only if $U \ket{\psi}$ is a product state. Then, the perfect discrimination can be obtained by a one-way LOCC protocol.
\end{lem}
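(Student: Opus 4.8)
The plan is to analyze perfect LOCC discrimination of the two POVM elements in Eq.~(\ref{eq:twopi}), namely the rank-one projector $\widetilde{\Pi}_1 = U\ketbra{\psi}{\psi}U^\dagger$ and the rank-three projector $\widetilde{\Pi}_2 = \frac{1}{3}(\mathrm{I} - U\ketbra{\psi}{\psi}U^\dagger)$. Since these are orthogonal projectors summing to the identity (up to the normalization), perfectly discriminating them by LOCC is the same as perfectly discriminating the pure state $\ket{\Phi} := U\ket{\psi}$ from the uniform mixture on its orthogonal complement, i.e.\ from the $3$-dimensional subspace orthogonal to $\ket{\Phi}$ in $\mathbb{C}^2\otimes\mathbb{C}^2$. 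For the ($\Leftarrow$) direction I would assume $\ket{\Phi} = U\ket{\psi} = \ket{a}_A\otimes\ket{b}_B$ is a product state and exhibit an explicit one-way protocol: Alice measures in a basis $\{\ket{a},\ket{a^\perp}\}$ and Bob measures in $\{\ket{b},\ket{b^\perp}\}$; the joint outcome $(\ket{a},\ket{b})$ is consistent with both states, but any other joint outcome has zero overlap with $\ketbra{a}{a}\otimes\ketbra{b}{b} = \widetilde\Pi_1$ and hence certifies $\widetilde\Pi_2$, while — crucially — one must check that $\widetilde\Pi_2$ restricted to the outcome $(\ket{a},\ket{b})$ branch, together with the rest, still conclusively identifies $\widetilde\Pi_1$; in fact the cleanest route is: Alice measures $\{\ketbra{a}{a},\mathrm{I}-\ketbra{a}{a}\}$, and if she gets the second outcome the state is surely $\widetilde\Pi_2$; if she gets the first, Bob measures $\{\ketbra{b}{b},\mathrm{I}-\ketbra{b}{b}\}$, and the second outcome again certifies $\widetilde\Pi_2$ while the first certifies $\widetilde\Pi_1$. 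This shows product-ness of $U\ket{\psi}$ suffices, and the protocol is manifestly one-way LOCC.

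For the ($\Rightarrow$) direction I would argue by contradiction: suppose $U\ket{\psi}$ is entangled, with Schmidt decomposition $\ket{\Phi} = \sqrt{\lambda_0}\ket{00}+\sqrt{\lambda_1}\ket{11}$ with both $\lambda_0,\lambda_1>0$, and suppose some LOCC protocol perfectly discriminates $\widetilde\Pi_1$ from $\widetilde\Pi_2$. Using the description of a general LOCC protocol recalled before Theorem~\ref{thm:1} (Kraus operators $E^{\mathrm{AB_{LOCC}}}_{(\vec k_n,\vec l_n)} = K^A\otimes L^B$ of product form), perfect discrimination forces the outcome set to split so that every product-form POVM element either annihilates $\ket{\Phi}$ (certifying $\widetilde\Pi_2$) or lies inside the support of $\widetilde\Pi_1 = \proj{\Phi}$ (certifying $\widetilde\Pi_1$). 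So I need: the only product operators $K\otimes L$ with $(K\otimes L)\ket{\Phi} \propto \ket{\Phi}$ are multiples of the identity-type operators that preserve the Schmidt structure, and in particular for such an operator to have its range equal to $\mathrm{span}\{\ket{\Phi}\}$ (rank one) it would need $K$ or $L$ to be rank one, which forces $(K\otimes L)\ket{\Phi}$ to be a product vector — contradicting that $\ket{\Phi}$ is entangled. Hence no single branch can conclusively output ``$\widetilde\Pi_1$'', so $\widetilde\Pi_1$ can never be certified, contradicting perfect discrimination. Equivalently, and perhaps more robustly, since the optimal POVM $\{\Pi_1,\Pi_2\}$ is unique (two-state discrimination, Ref.~\cite{ref:helstrom}) and $\Pi_1 = \proj{\Phi}$ is a rank-one projector onto an entangled vector, it is an entangled (non-separable) operator; but any LOCC-implementable POVM element is a sum of product operators ${E}^\dagger E = (K^A\otimes L^B)^\dagger(K^A\otimes L^B) = {K^A}^\dagger K^A\otimes {L^B}^\dagger L^B$, which are separable, and a rank-one projector that is separable must be onto a product vector — contradiction. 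This is the ``only if'' half and it is where I expect the real work to sit.

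The main obstacle is making the ($\Rightarrow$) direction fully rigorous: one must rule out that the rank-one element $\Pi_1$ is assembled as a sum of several product POVM elements from different terminal branches, each individually of higher rank, whose supports conspire to collapse onto $\mathrm{span}\{\ket{\Phi}\}$ — but this cannot happen because a sum of positive semidefinite operators has range containing the range of each summand, so every contributing ${E}^\dagger E$ must already have range inside $\mathrm{span}\{\ket{\Phi}\}$, hence be a non-negative multiple of $\proj{\Phi}$, hence (being a product operator ${K^A}^\dagger K^A\otimes{L^B}^\dagger L^B$ of rank one) force $\ket{\Phi}$ to be a product vector. I would also note we may invoke the cited result (Lemma~\ref{thm:theory:1} is attributed to Ref.~\cite{ref:chit}) directly; the proof sketch above is essentially the standard argument for why a maximally-informative ``pure state vs.\ its complement'' discrimination is LOCC-doable exactly when the pure state is unentangled, specialized to dimension $2\times 2$ where the complement is $3$-dimensional. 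Finally I would record that the protocol constructed in the ($\Leftarrow$) direction uses only local projective measurements and one round of classical communication, establishing the one-way LOCC claim.
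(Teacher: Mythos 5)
Your proposal is correct, but its ($\Rightarrow$) direction takes a genuinely different route from the paper. The paper assumes a \emph{one-way} protocol, writes the orthogonality condition $\widetilde{\Pi}_1 (K_A^\dagger K_A \otimes \mathrm{I})\widetilde{\Pi}_2 = 0$, expands $U\ket{\psi}$ in its Schmidt decomposition with $\mu\in(0,1)$, computes an explicit orthonormal basis $\{\ket{\phi_j}\}$ of the support of $\widetilde{\Pi}_2$ and the reduced operators $\tr_B[\ketbra{\phi_j}{\psi}U^\dagger]$, and shows these force $K_A^\dagger K_A \propto \mathrm{I}$, a contradiction. You instead argue structurally: every terminal POVM element of a (finite-round) LOCC protocol is a product operator ${K^A}^\dagger K^A \otimes {L^B}^\dagger L^B$; perfect discrimination forces every element in the branch-group that concludes ``$\widetilde{\Pi}_1$'' to have vanishing overlap with $\widetilde{\Pi}_2 \propto \mathrm{I}-\proj{\Phi}$, hence (since ranges of positive summands sit inside the range of their sum) each such element is a positive multiple of $\proj{\Phi}$; a rank-one positive product operator projects onto a product vector, so $\ket{\Phi}=U\ket{\psi}$ must be product. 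This buys you more than the paper's printed argument: it covers arbitrary finite-round LOCC (indeed any separable measurement), whereas the paper's converse as written only treats one-way protocols and leans on the explicit $2\times 2$ computation; the paper's computation, on the other hand, pinpoints concretely \emph{why} Alice's first measurement is forced to be trivial, which is instructive for constructing protocols. Your ($\Leftarrow$) direction coincides with the paper's, and correctly delivers the one-way claim. One small slip there: the joint outcome $(\ket{a},\ket{b})$ is \emph{not} consistent with both states --- since $\widetilde{\Pi}_2 \propto \mathrm{I}-\proj{ab}$, that outcome has probability zero on $\widetilde{\Pi}_2$ and conclusively certifies $\widetilde{\Pi}_1$ --- so the extra check you flag is unnecessary, though the sequential protocol you settle on is correct as stated.
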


\begin{proof}

($\Leftarrow$) Suppose that $U \ket{\psi}$ be a product state, denoted by $U \ket{\psi}= \ket{c}\ket{d},$ where $\ket{c} \in \mathcal{H}_A$ and $\ket{d} \in \mathcal{H}_B$. Then, the other POVM has a decomposition as follows, 
\begin{align}
\label{eq:theory:thm1:1}
\widetilde{\Pi}_2 = \frac{1}{3} \left( \proj{c_\perp,d} + \proj{c,d_\perp} + \proj{c_\perp,d_\perp} \right), 
\end{align}
where $\braket{c}{c_\perp} = \braket{d}{d_\perp} = 0$. The LOCC protocol for perfect discrimination between $\widetilde{\Pi}_1$ and $\widetilde{\Pi}_2$ is straightforward. Alice applies measurement in the orthonormal basis $\{ \ket{c}, \ket{c_\perp} \}$ and Bob does also in the orthonormal basis $\{ \ket{d}, \ket{d_\perp} \}$. Then if Alice obtains the outcome $\ket{c}$ and Bob the outcome $\ket{d}$, they declare that state $\widetilde{\Pi}_1$ is shared. Otherwise, they conclude state $\widetilde{\Pi}_2$. In this way, two parties can perfectly discriminated between two state $\widetilde{\Pi}_1$ and $\widetilde{\Pi}_2$. \smallbreak

($\Rightarrow$) Conversely, suppose that states $\widetilde{\Pi}_1$ and $\widetilde{\Pi}_2$ can be perfectly discriminated by a one-way LOCC protocol. Let Alice start the protocol, and $K_A$ denotes one of the Kraus operators of Alice's measurement in the one-way protocol for perfect discrimination. Consequently, the post-measurement states are given by,
\bea
(K_A\otimes \mathrm{I})\widetilde{\Pi}_1 (K_A\otimes \mathrm{I})^{\dagger}~~\mathrm{and}~~(K_A\otimes \mathrm{I})\widetilde{\Pi}_2 (K_A\otimes \mathrm{I})^{\dagger}. \label{eq:cont}
\eea
Since a Kraus operator $K_A$ on Alice's side leads to perfect discrimination, the post-measurement states in the above are orthogonal, i.e.,
\bea
\tr[(K_A\otimes \mathrm{I})\widetilde{\Pi}_1 (K_A\otimes \mathrm{I})^{\dagger} (K_A\otimes \mathrm{I})\widetilde{\Pi}_2 (K_A\otimes \mathrm{I})^{\dagger}] =0\nonumber
\eea
Rewriting the equation in the above, one can find that $  \widetilde{\Pi}_1  (K_{A}^{\dagger} K_A \otimes \mathrm{I}) \widetilde{\Pi}_2=0$. Let $\{  |\phi_1\rangle, |\phi_2\rangle, |\phi_3\rangle  \}$ be an orthonormal basis for the support of $\widetilde{\Pi}_2$. From Eq. (\ref{eq:twopi}) it follows that,

\bea
\tr_A [ K_{A}^{\dagger} K_A ( \tr_B [  |\phi_j\rangle\langle \psi| U^{\dagger} ] ) ]  =0,~~\forall j\in \{1,2,3\}. \label{eq:list}
\eea
That is, measurement ${K_A}^\dag K_A $ is orthogonal to the reduced operator $\tr_B [  |\phi_j\rangle\langle \psi| U^{\dagger}]$ for all $j=1,2,3$. Let $U \ket{\psi} $ have the following Schmidt decomposition.
\bea
U \ket{\psi} = \mu \ket{c}\ket{d} + \sqrt{1 - \mu^2} \ket{c_\perp}\ket{d_\perp} \label{eq:assume}
\eea
Suppose $\mu \in (0,1)$ for which the state $U | \psi \rangle$ in the above is entangled. Since $\widetilde{\Pi}_1$ and $\widetilde{\Pi}_2$ are orthogonal, one can find that the states $\{|\phi_{j}\rangle \}_{j=1}^3$ in the support of $\widetilde{\Pi}_2$ are written as follows,
\bea
|\phi_1 \rangle = - \sqrt{1-\mu^2} |c\rangle |d\rangle + \mu \ket{c_\perp} \ket{d_\perp},~ \ket{\phi_2} =  \ket{c} \ket{d_\perp},~ \mathrm{and}~\ket{\phi_3} =\ket{c_\perp}\ket{d} \nonumber 
\eea
for $|c\rangle \in\H_A$ and $|d\rangle \in \H_B$ under the assumption that $\mu\in (0,1)$. Consequently, from Eq. (\ref{eq:list}) the resulting operators $ \tr_B \left(\ketbra{\phi_j}{\psi}U^\dag  \right)$ for $j=1,2,3$ on the Alice side, that are orthogonal to $ K_{A}^{\dagger} K_A $, can be obtained as follows,
\bea
\mu \sqrt{1-\mu^2} \left( \proj{c_\perp} - \proj{c} \right), ~ \sqrt{1-\mu^2} \ketbra{c_\perp}{c}, ~\mathrm{and}~ \mu \ketbra{c}{c_\perp}. \label{eq:condition}
\eea
Then, we have $K_{A}^{\dagger} K_A \propto | c\rangle \langle c| + |c_{\perp}\rangle \langle c_{\perp}| =  \mathrm{I}$, that is, the measurement corresponds to an identity $\mathrm{I}$. This leads to the contradiction to the assumption that Alice's measurement can make two states in Eq. (\ref{eq:cont}) perfectly distinguishable, since the measurement is given by $\kappa\mathrm{I}$ for some $\kappa\in (0,1]$. Therefore, the state in Eq. (\ref{eq:assume}) is not entangled, i.e., we have $\mu =0$ or $\mu=1$, so that Alice's measurement $K_{A}^{\dagger} K_A$ can lead to perfect discrimination. We have shown that the state in Eq. (\ref{eq:assume}) is a product state. 
\end{proof}

\subsection{LOCC protocol for detecting noisy two-qubit gates}

We can now identify the problem of finding a product { two-qubit} state $|\psi\rangle$ such that the resulting state $U| \psi \rangle$ also remains a product state. From Lemma \ref{thm:theory:1}, this implies that two states in Eq. (\ref{eq:twopi}), optimal POVMs for discriminating between a two-qubit gate $U$ and its noisy counterpart $\N_{U}^{p}$, can be perfectly discriminated by LOCC. From Theorem \ref{thm:1}, therefore, two operations $U$ and $\N_{U}^{p}$ can be perfectly discriminated by LOCC. 

\begin{theorem}
\label{thm:theory:2}
For any two-qubit gate $U$, there exists a product state $\ket{\psi}$ such that is resulting state $U|\psi\rangle $ is also a product state. 
\end{theorem}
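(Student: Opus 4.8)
\emph{Proposed proof.} My plan is to reduce the statement to the linear-algebraic fact that \emph{every two-dimensional subspace of $\mathbb{C}^2\ten\mathbb{C}^2$ contains a product vector}, and then apply it to a suitable ``slice'' of $U$. Concretely, I would fix an arbitrary single-qubit state $\ket{a}\in\H_A$ and consider
\[
V_a := \{\, U(\ket{a}\ten\ket{b}) : \ket{b}\in\H_B \,\} = U(\ket{a}\ten\H_B)\subseteq\H_A\ten\H_B ,
\]
which is genuinely two-dimensional because $U$ is unitary, hence injective, so $\ket{b}\mapsto U(\ket{a}\ten\ket{b})$ is injective. Granting the subspace fact, $V_a$ contains a product vector $\ket{c}\ten\ket{d}$; since $U^{\dagger}V_a = \ket{a}\ten\H_B$, applying $U^{\dagger}$ returns a nonzero $\ket{b}\in\H_B$ with $U(\ket{a}\ten\ket{b})\propto\ket{c}\ten\ket{d}$, and then $\ket{\psi}:=\ket{a}\ten\ket{b}$ (normalized) is a product state whose image under $U$ is again a product state, which is the assertion.

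The core step is therefore the subspace fact, which I would prove as follows. Identify $\H_A\ten\H_B$ with the space of $2\times2$ complex matrices; under this identification product vectors correspond exactly to matrices of rank at most one, equivalently matrices of vanishing determinant. Pick a basis $\{M_1,M_2\}$ of the given two-dimensional subspace (linearly independent matrices). The quantity $\det(\alpha M_1+\beta M_2)$ is homogeneous of degree two in $(\alpha,\beta)$: either it vanishes identically, in which case the whole subspace already consists of product vectors, or by the fundamental theorem of algebra it has a root $(\alpha_0:\beta_0)\in\mathbb{CP}^1$, and $\alpha_0 M_1+\beta_0 M_2$ is then a nonzero matrix of rank one, i.e.\ a product vector inside the subspace. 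A closely related alternative, nearer to the computations already appearing in the proof of Lemma~\ref{thm:theory:1}, is to write $U\bigl(\ket{a}\ten(\alpha\ket{b_0}+\beta\ket{b_1})\bigr)$ for an orthonormal basis $\{\ket{b_0},\ket{b_1}\}$ of $\H_B$ and require its reduced operator on $\H_A$ to have rank one; this is again a degree-two homogeneous condition in $(\alpha:\beta)$ and hence solvable over $\mathbb{C}$.

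I do not anticipate a genuine obstacle: the argument is short, and the only points that need care are (i) confirming that $\dim V_a = 2$, which is exactly where unitarity of $U$ is used, and (ii) the degenerate case where the relevant determinant vanishes identically, which only makes the conclusion easier. If an explicitly constructive version were desired, one could instead read off the required product state from the canonical (KAK) decomposition of $U$, but the determinant argument above already settles existence cleanly and with no case analysis of $U$ itself.
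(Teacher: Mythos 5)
Your argument is correct, and it takes a genuinely different route from the paper. You fix an arbitrary $\ket{a}\in\H_A$, note that $V_a=U(\ket{a}\ten\H_B)$ is a two-dimensional subspace of $\mathbb{C}^2\ten\mathbb{C}^2$, and invoke the fact that any such subspace contains a product vector, proved by identifying vectors with $2\times2$ matrices and finding a root of the degree-two homogeneous polynomial $\det(\alpha M_1+\beta M_2)$; the degenerate case where the determinant vanishes identically is handled correctly, and unitarity is used only to guarantee $\dim V_a=2$ and to pull the product vector back to an input of the form $\ket{a}\ten\ket{b}$. The paper instead proceeds through the canonical (Cartan/KAK) decomposition $U=(U_A\ten U_B)U_d(V_A\ten V_B)$: it expands the input in the magic basis $\{\ket{\Phi_j}\}$, uses the criterion that $\ket{\psi}=\sum_j\alpha_j\ket{\Phi_j}$ is a product state iff $\alpha_1^2-\alpha_2^2+\alpha_3^2-\alpha_4^2=0$, and observes that the product conditions before and after $U_d$ amount to orthogonality of $\vec\alpha=(\alpha_1^2,\dots,\alpha_4^2)$ to at most three vectors in a four-dimensional space, so a solution exists; local unitaries are then absorbed at the end. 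Your approach buys elementarity and generality: it needs no canonical decomposition, it is the standard ``subspaces of dimension exceeding $(d_A-1)(d_B-1)$ contain product vectors'' argument in miniature, and it actually proves the stronger statement that for \emph{every} choice of $\ket{a}$ on Alice's side there is a matching $\ket{b}$. The paper's route is tied to the entangling parameters $\lambda_j$ of $U_d$, which makes the product input essentially computable from the canonical form and dovetails with how the subsequent LOCC measurement bases $\{U_A\ket{c},U_A\ket{c_\perp}\}$, $\{U_B\ket{d},U_B\ket{d_\perp}\}$ are written down; with your argument those bases would instead be read off from the Schmidt form of $U\ket{\psi}$, which is equally serviceable for the protocol.
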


\begin{proof}
A two-qubit gate has a canonical decomposition as follows ~\cite{ref:kraus} ~\cite{ref:navin}, 
\begin{equation}
\label{U}
U = \left(  U_A \otimes U_B \right) U_d \left(V_A \otimes V_B \right),
\end{equation}
with $U_A$, $V_A$, $U_B$, and $V_B$ local unitary transformations and $U_d$ an entangling unitary transformation, 
\bea
&& U_d = \sum_{j=1}^{4} e^{i \lambda_j} \proj{\Phi_j} ~~ \mathrm{where} \label{eq:dunitary} \\
&& \ket{\Phi_1} = \dfrac{ 1}{\sqrt{2}} (\ket{00} + \ket{11} ),~ \ket{\Phi_2} =  \dfrac{1 }{\sqrt{2}} ( \ket{00} - \ket{11}), \nonumber \\
&&      \ket{\Phi_3} =  \dfrac{1 }{\sqrt{2}} ( \ket{01} - \ket{10}),~ \ket{\Phi_4} =  \dfrac{1}{\sqrt{2}} ( \ket{01} + \ket{10} ). \nonumber
\eea
In the following, we show that one can find a product state that is also a product state after an entangling gate. Then, we extend it to arbitrary two-qubit gates. 

A two-qubit state can be written in the basis in the above, 
\bea
\ket{\psi} = \sum_{j=1}^{4} \alpha_j \ket{\Phi_j} \nonumber
\eea
which is a product state if and only if 
\bea
{\alpha}_1^2 - {\alpha}_2^2 + {\alpha}_3^2 - {\alpha}_4^2 = 0. \label{eq:p}
\eea 
After applying an entangling gate in Eq. (\ref{eq:dunitary}), the resulting state $U|\psi\rangle $ is separable if and only if 
\bea
\left(e^{ i \lambda_1} {\alpha}_1\right)^2 - \left(e^{ i \lambda_2} {\alpha}_2\right)^2 + \left(e^{ i \lambda_3} {\alpha}_3\right)^2 - \left(e^{ i \lambda_4} {\alpha}_4\right)^2 = 0 \label{eq:up}
\eea
One aims to find a vector $\vec{ \alpha } = (\alpha_{1}^2, \alpha_{2}^2, \alpha_{3}^2, \alpha_{4}^2)$ such that the conditions of a product state in Eqs. (\ref{eq:p}) and (\ref{eq:up}) are satisfied. Let us rewrite Eqs. (\ref{eq:p}) and (\ref{eq:up}) as conditions as follows, 
\bea
\vec{t} \cdot \vec{\alpha}  = 0,~ \vec{ u_{re}} \cdot \vec{\alpha} = 0, ~\mathrm{and}~\vec{ u_{im}} \cdot \vec{\alpha} =0 \label{eq:ortho}
\eea
where three vectors are defined as $\vec{t} = (1,-1,1,-1)^T$, $u_{re} = (  \cos 2\lambda_1 , - \cos 2\lambda_2, \cos 2\lambda_3, - \cos 2\lambda_4)^T$, and $u_{im} = (  \sin 2\lambda_1 , - \sin 2\lambda_2, \sin 2\lambda_3, - \sin 2\lambda_4)^T$. These vectors define a subspace, denoted by $S_{U}$ {  in} $\mathbb{R}^4$, whose dimension is less than or equal to three, i.e., $\mathrm{dim} S_U {\color{red}\le} 3$. Hence, its orthogonal complement subspace $S_{U}^{\perp}$ is not a null space, i.e., $\mathrm{dim} S_{U}^{\perp}\ge 1$. Since $\vec{\alpha}$ is orthogonal to the subspace $S_{U}$, one can always find $\vec{\alpha} \in S_{U}^{\perp}$ that satisfies the conditions in Eq. (\ref{eq:ortho}). This shows the existence of a product state $|\psi\rangle$ that remains separable after application of an entangling unitary transformation. 

We now extend the result to arbitrary two-qubit gates. From the results shown so far, for an entangling unitary gate $U_d$ one can always find a product state $\ket{\psi}=  |a\rangle |b \rangle$ for some $|a\rangle\in\H_A$ and $|b\rangle \in \H_B$ such that such that $U_d \ket{a}\ket{b}$ is a product state, denoted by  $U_d \ket{a}\ket{b}= \ket{c}\ket{d}$. For an arbitrary two-qubit gate in Eq. (\ref{U}), choose $\ket{\psi} = \left(V_A^\dag \otimes V_B^\dag \right) \ket{a} \ket{b}$ so that $U \ket{\psi} = \left( U_A \otimes U_B \right) \ket{c}\ket{d}$, which is also a product state.  
\end{proof}

In addition, an LOCC protocol to discriminate between two optimal POVM elements can be constructed as follows. Two optimal POVM  elements $\widetilde{\Pi}_1 = U |\psi\rangle \langle \psi | U $ and $\widetilde{\Pi}_2 = \left( \mathrm{1} - U \proj{\psi} U^\dag \right)/3$ can be decomposed 
\bea
\widetilde{\Pi}_1  & = &   (U_A \otimes U_B)| c,d\rangle \langle c,d | (U_{A} \otimes U_B)^{\dagger} \nonumber \\
\widetilde{\Pi}_2  & \propto &    \left( U_A \otimes U_B \right)   \Big( \proj{c_\perp,d} + \proj{c,d_\perp}  +  \proj{c_\perp,d_\perp} \Big) \left(U_A^\dag \otimes U_B^\dag\right).  \nonumber
\eea
{  To distinguish between the POVM elements the following measurements are then performed during the LOCC protocol:} $\left\{ U_A\ket{c}, U_A \ket{c_\perp} \right\}$ on the { Alice'} side and $\left\{ U_B\ket{d}, U_B \ket{d_\perp} \right\}$ on the { Bob's} side, and {  they} communicate the measurement outcomes. {  When Alice and Bob get their respective outcomes as $U_A\ket{c}$ and $U_B\ket{d}$, they conclude that POVM element they had was $\widetilde{\Pi}_1$. For any other choice of outcomes, they conclude the POVM element $\widetilde{\Pi}_2$.} In this way, they can discriminate between two POVM elements by LOCC. Thus, the protocol can discriminate between a gate $U$ and its noisy one $\N_{U}^{p}$ in an optimal way.

\section{Examples}
\label{sec:ex}

We have shown that a two-qubit gate $U$ and its noisy counterpart $\N_{U}^{p}$ can be optimally discriminated by LOCC. The LOCC protocol can be constructed if there exists a product state that remains separable after application of a two-qubit gate $U$. In the following, we present examples of two-qubit gates that are often applied in a realistic scenario and show how the protocol works in practice.

\subsection{The Controlled-NOT gate}

The controlled-NOT (CNOT) gate is one of the standard components in finite sets of universal quantum gates. Let $U_{\mathrm{c}}$ denote the CNOT gate. {  It implements the following transformation on the computational basis states} $|a\rangle |b\rangle$ { where} $a,b\in \{ 0,1 \}:$ $U_{\mathrm{c}} | a\rangle |b\rangle = |a\rangle |b \oplus a\rangle$, where $\oplus$ { denotes} the addition modulo $2$. The results in Sec. \ref{section:optimaldiscrimination} has shown that the LOCC protocol can be constructed by finding a product { state} that remains separable under application of the CNOT gate. Then, local measurements are applied, and the conclusion of the discrimination between $U_c$ and $\N_{U_c}^{p}$ can be made by communication with the measurement outcomes. 

One can easily find that state $\ket{\psi}= \ket{0}\ket{0}$ remains the same under the CNOT gate, i.e. $U_c|00\rangle = |00\rangle$. POVMs for the optimal discrimination are obtained as follows,
\bea
{ \Pi_1} & = & |0  \rangle_{A } \langle 0 | \otimes |0  \rangle_{B } \langle 0 | \nonumber  \\
{ \Pi_2} & = &  \mathrm{I}_A \otimes \mathrm{I}_B -  |0  \rangle_{A } \langle 0 | \otimes |0  \rangle_{B } \langle 0 | . \label{eq:optcnot} 
\eea
Note that the optimal measurements are in the computational basis. 

As it is shown in the previous section, the LOCC protocol of discrimination between two POVMs $\widetilde{\Pi}_1$ and $\widetilde{\Pi}_2$ works for discrimination between $U_{\mathrm{c}}$ and $\N_{U_{\mathrm{c}}}^p$. Let $\mathrm{LOCC_c}$ denote the protocol, which works as follows. Alice and Bob apply measurements in the computational basis, $\{ \ket{0}_{A}, \ket{1}_{A} \}$ and $\{ \ket{0}_{B}, \ket{1}_{B} \}$, and communicate their measurement outcome. If they find the outcome $00$, they conclude the gate operation is noiseless, i.e., the unitary gate $U_{\mathrm{c}}$. Otherwise, for outcomes $01$, $10$, and $11$, they conclude that there exists noise, i.e., $\N_{U_{\mathrm{c}}}^p$ is performed. 

Let us illustrate how the LOCC protocol can be used to find if a gate operation is unitary or noisy in a practical realization, see Fig. (\ref{fig:fig1}).We consider a scenario that a gate operation is unitary $U_{\mathrm{c}}$ and noisy $\N_{U_{\mathrm{c}}}^p$ with probabilities $1-q$ and $q$, respectively. Recall the oracle function $f$ that gives outcomes $0$ if $X=U_c$ and $1$ if $X=\N_{U_c}^p$. Let $T = \mathrm{LOCC_c}$ denote the LOCC strategy. Then, we compute 
\bea
p_{\g}^{(\mathrm{LOCC})} = (1-q ) \mathrm{Pr} [  \mathrm{LOCC_c }(X)  =0 | f(X) = 0] +  q \mathrm{Pr} [  \mathrm{LOCC_c}(X)  =1 | f(X) = 1]\nonumber
\eea
where $ \mathrm{LOCC_c} (X)$ gives conclusion $0$ if the protocol $\mathrm{LOCC_c}$ finds $X$ as a unitary gate and $1$ otherwise. As it is explained in the above, for measurement outcome $00$ on register $AB$, we have $ \mathrm{LOCC_c}(X) =0$ and, otherwise for $01$, $10$, and $11$, we have $ \mathrm{LOCC_c}(X) =1$. 

Suppose that an input state $|\psi\rangle =|00\rangle$ is prepared. If the CNOT gate is applied, the outcome is $00$ with certainty: $\mathrm{Pr} [  \mathrm{LOCC_c }(X)  =0 | f(X) = 0]=1$. When its noisy counterpart is applied, we have $\mathrm{Pr} [  \mathrm{LOCC_c }(X)  =1 | f(X) = 1] = 3p/4$ since outcomes $01$, $10$ and $11$ appear with probability $p/4$ respectively. Thus we have, $p_{\g}^{(\mathrm{LOCC})} = 1-q +\frac{3}{4} pq$, which is equal to $p_{\g}^{(G)}$ in Eq. (\ref{eq:pg2}). Thus, for the CNOT gate it is shown that $p_{\g}^{(\mathrm{LOCC})} = p_{\g}^{(\G)}$.

\subsection{SWAP Gate}
Let $U_s$ denote the SWAP gate, that works as $U_s \ket{a}_A\ket{b}_B = \ket{b}_A \ket{a}_B$ for qubit states $\{ |a\rangle, |b\rangle\}\in \{|0\rangle,|1\rangle \}$. For an input state $\ket{\psi}_{AB} = \ket{a}_A \ket{b}_B$, the resulting state $U_s \ket{\psi} = \ket{b}_A \ket{a}_A$ is separable. The measurement outcome $ba$ on $AB$ leads to the conclusion that $X=U_s$, and otherwise for $\bar{b}a$, $b\bar{a}$ and $\bar{b}\bar{a}$ where $\bar{x}=1-x ~(\mod 2)$, a noisy channel $X = \N_{U_s}^p$. One can find that $p_{\g}^{(\mathrm{LOCC})} = p_{\g}^{(G)}$ for the SWAP gate, too.


\section{How useful is LOCC for channel discrimination? }
\label{sec:discussion}

In the construction of the LOCC protocol to detect a noisy channel, we have assumed that the noisy counterpart is in the form in Eq. (\ref{eq:noisychannel}). This can be obtained by a depolarization protocol ~\cite{ref:dur} or the technique of twirling quantum channels ~\cite{ref:eme1, ref:eme2}. Then, it is shown that optimal discrimination between two-qubit unitary $U$ and $\N_{U}^p$ can be achieved by LOCC, i.e., we have
\bea
p_{\g}^{(\mathrm{LOCC})} = \frac{1}{2} \big(  1+ \frac{3}{4}pq + \big | 1-2q + \frac{3}{4}pq \big|  \big),  \label{eq:pglocc1}
\eea
for which the protocol can be constructed from the results. 

We remark the significance of the depolarization before proceeding to the LOCC discrimination. Suppose that that the depolarization is not yet applied, in particular, the noisy channel is given as 
\bea
 \N_{U_c} [\rho] = (1-p) U_c \rho U_{c}^\dag + p U' \rho U'^\dag, \label{benasque:noisy} 
\eea
where $U_c$ is the CNOT gate and $U' = U_c \left(U_A \otimes U_B \right)$ with $U_A = \proj{0} + i \proj{1}$ and $U_B = \proj{0} + i \proj{1}$. With global strategies, one can find that $p_{\g}^{(\G)} = (1/2 + 3p/8)$, when $q$ is set equal to $1/2$. This can be achieved with the input state $\ket{\phi^{+}} = (|00\rangle + |11\rangle)/\sqrt{2}$. 

However, the optimal discrimination between two operations $U_c$ and $\N_{U_c}$ cannot be achieved by LOCC, i.e. $p_{\g}^{(\mathrm{LOCC})} < p_{\g}^{(G)}$. To have that $p_{\g}^{(G)} = p_{\g}^{(\mathrm{LOCC})}$ we need the condition that the following holds true
\bea
\max_{\rho} \| U_c \rho U_{c}^{\dagger} - \N_{U_c}[\rho] \|_1 =  \max_{\rho \in \mathrm{SEP}} \| U_c \rho U_{c}^{\dagger} - \N_{U_c}[\rho] \|_{\mathrm{LOCC}}. \nonumber
\eea
One can find that the above can be rewritten equivalently as follows,
\bea
\max_{\rho} \| U_c \rho U_{c}^{\dagger} -  U' \rho U'^{\dagger} \|_1 =  \max_{\rho\in \mathrm{SEP}} \| U_c \rho U_{c}^{\dagger} -  U' \rho U'^{\dagger} \|_{\mathrm{LOCC}}, \nonumber
\eea
which, however, does not hold true in general as it is shown in Ref. ~\cite{ref:bae}. That is, for discrimination between $U$ and $U'$, there is a gap between LOCC and global strategies.

\subsection*{Estimation of the Noisy Parameter $p$}

We remark that Eq. (\ref{eq:pglocc1}) shows a relation between the guessing probability and the noise fraction $p$. Recall that $q$ is {\it a priori} probability. This means that, conversely, the noise fraction $p$ can be obtained by finding $p_{\g}^{(\mathrm{LOCC})}$ if $q$ is known. Note that the value of $p$ depends on the noisy channel $\N_{U}^p$ only, i.e., it does not depend on the value of the {\it a priori} probability $q$.  Since Alice and Bob do not have a noiseless unitary $U$ in their possession, they can choose $q$ to be equal to one, i.e., $q=1$. When $q=1$, then the term $1-2q + \frac{3}{4}pq $ becomes $\frac{3}{4}p-1$, which is always negative, i.e., $\frac{3}{4}p < 1$. Despite this, we can still use our derivation in Eq. (\ref{section:optimaldiscrimination}) to estimate $p$ using only local resources. In fact, Alice and Bob can use the same protocol as mentioned in Eq. (\ref{section:optimaldiscrimination}), but now, instead of using the protocol for a single-shot channel identification, they use multiple rounds to obtain the frequences $f_0$ for outcome $U_A\ket{c}$ and $U_B\ket{d}$ respectively, and $f_1$ for any other outcomes. From Eq. (\ref{nc}), it is easy to see that $p = \frac{4}{3}f_1$. Thus, they can estimate the value of $p$ using only local resources.


\section{ Conclusion} 
\label{sec:conclusion}

In this work, we have exploited quantum channel discrimination to detect a noisy channel in quantum gates. We have investigated discrimination between a two-qubit unitary transformation and its noisy counterpart with local resources. It is shown that for the noisy counterpart with the depolarization channel, LOCC suffice to achieve optimal discrimination of global strategies. The LOCC protocol is also illustrated with examples of the CNOT and the SWAP gates. Note that the protocol can be readily applied in practice to detect if a gate in a quantum circuit is noisy, or not. Moreover, the noise fraction can also be estimated by the protocol. Finally, our results can be generalized to arbitrary noisy channels with the channel twirling ~\cite{ref:dur, ref:eme1, ref:eme2}, where the depolarization of a noisy channel, however, contains applications entangled unitary transformations. It would be interesting to find if twirling of quantum channels can be replaced with an LOCC protocol, so that detection of noisy channels can be generally performed with local resources only.

\section{Acknowledgement}
 This work is supported by National Research Foundation of Korea (NRF2017R1E1A1A03069961), the ITRC(Information Technology Research Center) support program(IITP-2018-2018-0-01402) supervised by the IITP(Institute for Information \& communications Technology Promotion) and the KIST Institutional Program (Project No. 2E26680-18-P025).


\end{document}